\def\@tocline#1#2#3#4#5#6#7{\relax
  \ifnum #1>\c@tocdepth % then omit
  \else
    \par \addpenalty\@secpenalty\addvspace{#2}%
    \begingroup \hyphenpenalty\@M
    \@ifempty{#4}{%
      \@tempdima\csname r@tocindent\number#1\endcsname\relax
    }{%
      \@tempdima#4\relax
    }%
    \parindent\z@ \leftskip#3\relax \advance\leftskip\@tempdima\relax
    \rightskip\@pnumwidth plus4em \parfillskip-\@pnumwidth
    #5\leavevmode\hskip-\@tempdima
      \ifcase #1
       \or\or \hskip 1em \or \hskip 2em \else \hskip 3em \fi%
      #6\nobreak\relax
      \dotfill
      \hbox to\@pnumwidth{\@tocpagenum{#7}}
    \par
    \nobreak
    \endgroup
  \fi}
\theoremstyle{plain}
\newtheorem{theorem}{Theorem}%[section]
\newtheorem{corollary}[theorem]{Corollary}
\newtheorem{conjecture}[theorem]{Conjecture}
\theoremstyle{remark}
\newcommand\R{{\ensuremath {\mathbb R} }}
\newcommand\C{{\ensuremath {\mathbb C} }}
\newcommand\1{{\ensuremath {\mathds 1} }}
\newcommand\nn{\nonumber}
\newcommand\bq{\begin{equation}}
\newcommand\eq{\end{equation}}
\renewcommand\phi{\varphi}
\newcommand{\gH}{\mathfrak{H}}
\newcommand{\wto}{\rightharpoonup}
\newcommand{\cK}{\mathcal{K}}
\newcommand{\cE}{\mathcal{E}}
\newcommand{\cL}{\mathcal{L}}
\newcommand{\eps}{\epsilon}
\renewcommand{\epsilon}{\varepsilon}
\DeclareMathOperator{\Tr}{{\rm Tr}}
\renewcommand{\ge}{\geqslant}
\renewcommand{\le}{\leqslant}
\renewcommand{\leq}{\leqslant}
\title[Convergence of Levy-Lieb to Thomas-Fermi functional]{Convergence of Levy-Lieb to Thomas-Fermi density functional}%{Convergence of Levy-Lieb density functional to Thomas-Fermi functional}
\author[N. Gottschling]{Nina Gottschling}
\address{Ludwig Maximilian University of Munich, Department of Mathematics, Theresienstrasse 39, D-80333 Munich, Germany}
\email{ngottschling@posteo.net,  nmg43@cam.ac.uk}
\author[P.~T. Nam]{Phan Th\`anh Nam}
\address{Ludwig Maximilian University of Munich, Department of Mathematics, Theresienstrasse 39, D-80333 Munich, Germany} 
\email{nam@math.lmu.de}
\begin{document}
\date{\today}

\begin{abstract} We prove that the Levy-Lieb density functional Gamma-converges to the Thomas-Fermi functional in the semiclassical mean-field limit. In particular, this aides an easy alternative proof of the validity of the atomic Thomas-Fermi theory which was first established by Lieb and Simon. 
\end{abstract}

\maketitle

\bigskip\bigskip

\bigskip\bigskip

\setcounter{tocdepth}{2}
\tableofcontents

%%%%%%%%%%%%%%%%%%%%%%%%%%%%%%%%%%%%%%%%%%
%%%%%%%%%%%%%%%%%%%%%%%%%%%%%%%%%%%%
\section{Introduction}
%%%%%%%%%%%%%%%%%%%%%%%%%%%%%%%%%%%%%%%%%%
%%%%%%%%%%%%%%%%%%%%%%%%%%%%%%%%%%%%%%%%%%

From first principles of quantum mechanics, the total energy of $N$ identical fermions in $\R^d$ with spin $q\ge 1$ can be described by a Hamiltonian $H_N$ in the Hilbert space
$$\gH^N_a= L^2_a\left((\R^{d}\times \{1,2,...,q\})^N;\C\right),$$
which contains wave functions which are anti-symmetric under the permutations of space-spin variables:
$$
\Psi_N(...,(x_i,\sigma_i),...,(x_j,\sigma_j),...)=-\Psi_N(...,(x_i,\sigma_i),...,(x_j,\sigma_j),...).
$$ 
In particular, the ground state energy of the system is  
\begin{align}\label{eq:EN-HN}
E_N^{\rm QM} &= \inf \left\{ \langle \Psi_N, H_N \Psi_N \rangle: \Psi_N \in S_N\right\},
\end{align}
where $S_N$ is the set of all (normalized) wave functions in the quadratic form domain of $H_N$. 

Although the above microscopic theory is very precise, it usually becomes too complicated for practical calculations when $N$ is large. Therefore, it is desirable to develop effective theories which depend on less variables but still capture some collective properties of the system in certain regimes. 

\subsection{Levy-Lieb and Thomas-Fermi density functionals}

In density functional theory, instead of considering a complicated wave function $\Psi_N\in S_N $ one simply looks at its one-body density 
$$
\rho_{\Psi_N}(x)= N \sum_{\sigma_1,...,\sigma_N\in \{1,...,q\}} \int_{(\R^d)^{N-1}} |\Psi_N((x,\sigma_1),(x_2,\sigma_2),...,(x_N,\sigma_N))|^2 d x_2 ... d x_N,
$$
which satisfies the simple constraints
$$\rho_{\Psi_N}(x)\ge 0, \quad \int_{\R^d} \rho_{\Psi_N}(x) d x =N. $$

The idea of describing a quantum state using its one-body density goes back to Thomas \cite{Thomas-27} and Fermi \cite{Fermi-27} in 1927. It was conceptually pushed forward by a variational principle of Hohenberg and Kohn \cite{HohKoh-64} in 1964, and since then many variations have been proposed.

In this paper, we are interested in the Levy-Lieb density functional \cite{Levy-79,Lieb-83}
\begin{align}\label{eq:FN}
\cL_N(\rho) = \inf \left\{ \langle \Psi_N, H_N \Psi_N \rangle: \Psi_N \in S_N , \rho_{\Psi_N}=\rho \right\}.
\end{align}
This is nicely related to the ground state  problem via the identity
\begin{align}\label{eq:EN-FN}
E_N^{\rm QM} = \inf \left\{ {\cL_N}(\rho): \rho \ge 0, \int_{\R^d} \rho =N \right\},
\end{align}
but we will consider \eqref{eq:FN} in a general context (without limiting to the ground state problem). The complication of the many-body problem is now hidden in the determination of ${\cL_N}$, and finding a good approximation is desirable. 

In this paper, we will focus on the typical situation when the particles are governed by the non-relativistic kinetic operator, an external potential and a pair-interaction potential, namely the Hamiltonian of the system reads (in  appropriate units)
\bq \label{eq:HN}
H_N=\sum_{i=1}^N \Big(- h^2\Delta_{x_i}+V(x_i)\Big)+\lambda \sum_{1\le i<j\le N} w(x_i-x_j).
\eq
Here $h>0$ plays the role of Planck's constant, and $\lambda>0$ corresponds to the strength of the interaction. The Thomas-Fermi approximation  \cite{Thomas-27,Fermi-27} suggests that 
\begin{align}\label{eq:FN}
{\cL_N}(\rho) \approx K_{\rm cl} h^2 \int_{\R^d} \rho^{1+2/d} + \int_{\R^d} V\rho + \frac{\lambda}{2}\iint_{\R^d \times \R^d} \rho(x)\rho(y) w(x-y) d x d y
\end{align}
where
$$
 K_{\rm cl}  = \frac{d}{d+2} \cdot  \frac{(2\pi)^2}{(q|B_{\R^d}(0,1)|)^{2/d}} .
$$

Historically, the Thomas-Fermi approximation was proposed for the atomic Hamiltonian, when $V$ and $w$ are Coulomb potentials in $\R^3$, but we may expect that it holds in a more general context. In this paper, we aim at giving rigorous justifications for \eqref{eq:FN} in the semiclassical mean-field regime 
$$
N\to \infty, \quad h \sim N^{-1/d}, \quad \lambda\sim N^{-1},
$$
which is natural to make all three terms on the right side of \eqref{eq:FN} comparable. 

To formulate our statements precisely, let us denote $\rho=Nf$ and rewrite \eqref{eq:FN} as 
\begin{align}\label{eq:TF-app-res}
{\cE_N}(f) \approx \cE^{\rm TF} (f)
\end{align}
where
\begin{align} \label{eq:LL-TF}
\cE_N(f) &= N^{-1}\cL_N(Nf ) = \inf \left\{ N^{-1} \langle \Psi_N, H_N \Psi_N \rangle: \Psi_N \in S_N, \rho_{\Psi_N}=Nf \right\},\nn\\
\cE^{\rm TF}(f)&= K_{\rm cl} \int_{\R^d} f^{1+2/d} + \int_{\R^d} Vf + \frac{1}{2}\iint_{\R^d \times \R^d} f(x)f(y) w(x-y) d x d y.
\end{align}

\subsection{An open problem} We expect that \eqref{eq:TF-app-res} holds for a very large class of potentials. However, to make the discussion concrete, let us assume the following conditions in the rest of the paper. 

\medskip
\noindent{\bf Conditions on potentials}. {\em The potentials $V,w:\R^d \to \R$ belong to $L^{p}(\R^d)+L^{q}(\R^d)$ with $p,q\in [1+d/2,\infty)$. 	Moreover, $w$ admits the decomposition
	\begin{align}\label{eq:Fd-decomp}
	w(x) = \int_0^{\infty} (\chi_r * \chi_r) (x)  dr,
	\end{align}
	for a  family of radial functions $0\le \chi_r \in L^{p}(\R^d)+L^{q}(\R^d)$ with  $p,q\in [2+d,\infty)$ }
\medskip

These assumptions hold for the Coulomb potentials in $\R^3$; in particular we have the Fefferman-de la Llave formula \cite{FefLla-86}
 $$
 \frac{1}{|x|} = \frac{1}{\pi}\int_0^{\infty} \frac{1}{r^5} (\1_{B_r} * \1_{B_r})(x)  dr
 $$
where $\1_{B_r}$ is the characteristic function of the ball $B(0,r)$ in $\R^3$. In fact, \eqref{eq:Fd-decomp} holds true for a large class of radial positive functions; see  \cite{HaiSei-02} for details.

%\textbf{comparison idea \cite{FouLewSol-15}} In \cite{FouLewSol-15} they were able to prove convergence of the ground states (and to the minimizers) for a more general class of interaction potentials, where an even interaction potential $w \in L^{1+2/d}(\R^d)+L^{\infty}_{\eps}(\R^d)$ was required. For the lower bound of Theorem \ref{thm:Gamma-CV2} the stronger condition \eqref{eq:Fd-decomp} was necessary (sufficient) to achieve the lower bound \eqref{eq:proof-main-3} in order to seperate the variables and apply the weak convergence. The method for the lower bound for the interaction term as used in \cite{FouLewSol-15} was to split up the interaction potential into groups of particles. This does not seem work when not only considering the the bottom of the spectrum/ ground state energy.\\

Recall $\cE_N$ and $\cE^{\rm TF}$ in \eqref{eq:LL-TF}. We expect that the following holds true.

\begin{conjecture}[Semiclassical mean-field limit of Levy-Lieb functional] \label{conj-1} For all $d\ge 1$, in the limit $N\to \infty$, $hN^{1/d}\to 1$, $\lambda N \to 1$,  we have 
\begin{align} \label{eq:pointwise-cv}
\cE_N(f)\to \cE^{\rm TF}(f)
\end{align}
for every function $f$ satisfying $f\ge 0$, $\sqrt{f}\in H^1(\R^d)$ and $\int_{\R^d} f=1$. 
\end{conjecture}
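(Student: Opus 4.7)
\medskip

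\noindent\textbf{Overall strategy.} The plan is to prove the pointwise limit by matching upper and lower bounds on $\cE_N(f)$. The potential term $N^{-1}\langle\Psi_N,\sum_i V(x_i)\Psi_N\rangle=\int Vf$ is fixed by the density constraint $\rho_{\Psi_N}=Nf$, so the task reduces to the kinetic and interaction contributions. Heuristically, in the regime $h\sim N^{-1/d}$ an $N$-fermion state with density $Nf$ locally fills a Fermi sea of radius $\propto f(x)^{1/d}$ in momentum, while the mean-field scaling $\lambda\sim N^{-1}$ makes pair correlations negligible at leading order.

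\medskip

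\noindent\textbf{Upper bound.} For smooth, compactly supported $f$ bounded below on its support, I would build a Slater determinant $\Psi_N$ from $N$ orthonormal wave packets indexed by a phase-space lattice inside the Thomas--Fermi set $A(f)=\{(x,p):|p|\le c_{d,q}f(x)^{1/d}\}$ (or, equivalently, from coherent states). A Weyl-law computation gives $N^{-1}h^2\Tr(-\Delta\,\gamma_{\Psi_N})\to K_{\rm cl}\int_{\R^d}f^{1+2/d}$. Since $\Psi_N$ is quasi-free, $\rho^{(2)}(x,y)=\rho(x)\rho(y)-|\gamma(x,y)|^2$; the direct part supplies the Thomas--Fermi interaction, and the exchange part is $o(1)$ because $|\gamma(x,y)|$ is supported on $|x-y|\lesssim h$, a fact controlled via \eqref{eq:Fd-decomp} and the $L^p+L^q$ assumption on $w$. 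To match the density exactly rather than approximately, I would apply a transport map close to the identity obtained from Moser's lemma, whose energetic cost on all three terms is controlled by $\|\nabla\sqrt{f}\|_{L^2}$. Passing from smooth $f$ to general $f$ with $\sqrt{f}\in H^1(\R^d)$ is done by approximation and continuity of $\cE^{\rm TF}$.

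\medskip

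\noindent\textbf{Lower bound.} For arbitrary $\Psi_N\in S_N$ with $\rho_{\Psi_N}=Nf$, introduce the Husimi function $m_{\Psi_N}$ obtained by testing against coherent states at scale $h$. A semiclassical estimate, together with the Pauli bound $m_{\Psi_N}\le q(2\pi)^{-d}$ and the marginal identity $\int m_{\Psi_N}(x,p)\,dp=f(x)+o(1)$, yields via bathtub rearrangement in $p$ the Thomas--Fermi kinetic $K_{\rm cl}\int f^{1+2/d}$. For the interaction I would invoke \eqref{eq:Fd-decomp} to write
\[ \textstyle\sum_{i<j}w(x_i-x_j) \;=\; \tfrac12\int_0^\infty\!dr\int_{\R^d}\!dz\,\big[\big(\sum_i\chi_r(z-x_i)\big)^2-\sum_i\chi_r^2(z-x_i)\big], \]
use $\langle(\sum_i\chi_r(z-x_i))^2\rangle\ge(\chi_r*\rho_{\Psi_N})^2(z)$, and reassemble the $r$-integral to recover $\tfrac12\iint f(x)f(y)w(x-y)\,dx\,dy$. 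The remaining one-body contribution is a self-energy of size $\sim\lambda N\int_0^\infty\|\chi_r\|_2^2\,dr$.

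\medskip

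\noindent\textbf{Main obstacle.} The hardest step is the interaction lower bound for singular $w$ such as Coulomb, where $\int_0^\infty\|\chi_r\|_2^2\,dr=\infty$ and the Fefferman--de la Llave self-energy cannot be discarded directly. The remedy I would pursue is to truncate the $r$-integral at some $r_0(N)\to 0$, absorb the short-distance piece into the kinetic energy through a Lieb--Thirring or Hardy-type bound, and then optimize $r_0$. A secondary technical obstacle is the density matching on the upper-bound side: when $f$ is only $H^1$ the transport map lacks classical regularity, so one has to regularize $f$, apply Moser, and then pass to the limit using lower semicontinuity of $\cE^{\rm TF}$ and a stability estimate for $\cE_N$ under small density perturbations.
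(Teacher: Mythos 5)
You should first note that this statement is posed in the paper as a \emph{conjecture}, and the authors say explicitly that they ``could not prove the pointwise-type convergence in Conjecture~\ref{conj-1}''; what they do prove is the weaker Gamma-convergence in Theorem~\ref{thm:Gamma-CV2}. So there is no proof in the paper to compare against, and the question is whether your sketch would close a known open problem.

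Your lower-bound sketch is sound and is, in substance, already covered by the paper: applying Theorem~\ref{thm:Gamma-CV2}(i) with the constant sequence $f_N\equiv f$ gives $\liminf_N\cE_N(f)\ge\cE^{\rm TF}(f)$, and the ingredients you list (Weyl's law/coherent states for the kinetic part, the Fefferman--de la Llave formula \eqref{eq:Fd-decomp} plus Cauchy--Schwarz for the interaction, and the bound \eqref{eq:LT-N} if one wants quantitative constants) are the same ones the paper uses. The Coulomb self-energy issue you flag as your ``main obstacle'' is in fact handled by the paper's positive-part trick and Fatou: the term $\lambda N\cdot N^{-2}\cdot N(f_N*\chi_r^2)$ is $O(N^{-1})$ pointwise in $(r,z)$, and the assumption $\chi_r\in L^p+L^q$ with $p,q\ge 2+d$ is exactly what makes this work; no truncation in $r$ is needed.

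The genuine gap is the upper bound, and it is precisely the step you dismiss as a ``secondary technical obstacle.'' To prove \eqref{eq:pointwise-cv} you must exhibit, for each fixed $f$, fermionic states $\Psi_N$ with one-body density \emph{identically} equal to $Nf$ and with $N^{-1-2/d}\langle\Psi_N,\sum_i(-\Delta_{x_i})\Psi_N\rangle\to K_{\rm cl}\int f^{1+2/d}$. As the paper records in remark~2, already for the ideal gas ($V=w=0$) this is the March--Young upper bound \eqref{eq:conjecture-kinetic}, which is \emph{open for all $d\ge2$}. The March--Young construction is itself a transport-type argument (a monotone rearrangement of the uniform Fermi sea in $d=1$), and the reason it fails for $d\ge2$ is exactly what your Moser-lemma step elides: the approximate density $\tilde\rho_N$ of a phase-space--lattice or cube-eigenfunction Slater determinant oscillates on the Fermi wavelength scale $N^{-1/d}$, so any diffeomorphism $T_N$ transporting $\tilde\rho_N$ to $Nf$ must oscillate on that same scale, $\|DT_N-\mathrm{Id}\|$ is not small uniformly, and the induced error in $\sum_i\int|D T_N^{-\top}\nabla u_i|^2|\det DT_N^{-1}|$ is \emph{not} controlled by $\|\nabla\sqrt f\|_{L^2}^2$. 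Asserting that Moser's lemma produces a map ``whose energetic cost on all three terms is controlled by $\|\nabla\sqrt f\|_{L^2}$'' is asserting the conjecture, not proving it. The paper's Theorem~\ref{thm:Gamma-CV1}(ii)/\ref{thm:Gamma-CV2}(ii) circumvent this by \emph{dropping exact density matching}: they build Slater determinants from Dirichlet eigenfunctions on a cube partition, obtaining $f_N\to f$ strongly but $f_N\ne f$, which yields only the Gamma-limit upper bound \eqref{eq:Gcv-upper}. Likewise, your final approximation step (smooth $f\to$ general $f$ ``by continuity of $\cE^{\rm TF}$'') does not transfer a bound on $\cE_N(f_k)$ to $\cE_N(f)$, since these are minimizations under \emph{different} hard constraints $\rho_{\Psi_N}=Nf_k$ vs.\ $Nf$, and no uniform-in-$N$ continuity of $\cE_N$ in $f$ is available. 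Until these two points are resolved, the proposal does not prove Conjecture~\ref{conj-1} for $d\ge2$.
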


Here are some immediate remarks on Conjecture \ref{conj-1}.

\medskip
\noindent 1) By the Hoffmann-Ostenhof inequality \cite{Hoffmann-Ostenhof-77}
\begin{align}
\left\langle \Psi_N, \sum_{i=1}^N (-\Delta_{x_i}) \Psi_N \right\rangle \ge \int_{\R^d} |\nabla \sqrt{\rho_{\Psi_N}}|^2,
\end{align}
the condition $\sqrt{f}\in H^1(\R^d)$ in Conjecture \ref{conj-1} is necessary to ensure that $\cE_{N}(f)<\infty$.

\medskip
\noindent 2) In the ideal Fermi gas (i.e. $V=w=0$), the Levy-Lieb functional boils down to the kinetic density functional 
\begin{align} \label{eq:def-cKN}
\cK_N(f):= \inf \left\{ N^{-1-2/d} \left\langle \Psi_N, \sum_{i=1}^N (-\Delta_{x_i}) \Psi_N \right\rangle: \Psi_N \in S_N,\rho_{\Psi_N}=Nf \right\}.
\end{align}
Conjecture \ref{conj-1} for the ideal Fermi gas states that, for all $d\ge 1$, 
\begin{align}\label{eq:conjecture-kinetic} \cK_N(f)\to K_{\rm cl} \int_{\R^d} f^{1+2/d}.
\end{align}
In fact, the following stronger, quantitative bounds are expected to hold \cite{MarYou-58,LieThi-75,Lieb-83}
\begin{align}\label{eq:conjecture-kinetic-stronger}  K_{\rm cl} \int_{\R^d} f^{1+2/d} \le \cK_N(f) \le K_{\rm cl} \int_{\R^d} f^{1+2/d} + N^{-2/d}\int_{\R^d} |\nabla \sqrt{f}|^2,
\end{align}
for all $d\ge 1$ for the upper bound and all $d\ge 3$ for the lower bound. 

The upper bound in \eqref{eq:conjecture-kinetic-stronger} was proposed by March and Young in 1958 \cite{MarYou-58}. Their proof works for $d=1$, but fails in higher dimensions (see \cite{Lieb-83} for a discussion, and see \cite{Acharya,GazRob} for numerical investigations in $d=3$). 

The lower bound in \eqref{eq:conjecture-kinetic-stronger} was conjectured by Lieb and Thirring in 1975 \cite{LieThi-75,LieThi-76} and they proved the bound with a universal constant (different from $K_{\rm cl}$). Note that in $d=1$, the sharp constant in the lower bound in \eqref{eq:conjecture-kinetic-stronger} is  known to be smaller than $K_{\rm cl}$ (it is conjectured to be the optimal constant in a Sobolev-Gagliardo-Nirenberg inequality \cite{LieThi-76}). Despite several improvements over the constant (see \cite{DolLapLos-08}  for the best known result), the sharp constant in the lower bound in \eqref{eq:conjecture-kinetic-stronger} is still open in all $d\ge 1$. On the other hand, recently we  proved that  \cite{Nam-18a}
\begin{align}\label{eq:LT-N} 
\cK_N(f)  \ge (K_{\rm cl}-\eps) \int_{\R^d} f^{1+2/d} - C_\eps N^{-2/d}\int_{\R^d} |\nabla \sqrt{f}|^2
\end{align}
for all $d\ge 1$ and all $\eps>0$. This implies the lower bound in \eqref{eq:conjecture-kinetic}. The upper bound in \eqref{eq:conjecture-kinetic}  is open for all $d\ge 2$.  

\medskip
\noindent 3) If we ignore the kinetic part, the Levy-Lieb functional reduces to the classical interaction functional 
\begin{align} \label{eq:IN-classical}
\mathcal{I}_N(f):=  \inf \left\{ N^{-2}\left\langle \Psi_N, \sum_{1\le i<j\le N} w(x_i-x_j) \Psi_N \right\rangle: \Psi_N \in S_N, \rho_{\Psi_N}=Nf \right\}.
\end{align}
Conjecture \ref{conj-1} becomes  
\begin{align} \label{eq:Con-int}
\lim_{N\to \infty} \mathcal{I}_N(f) = \frac{1}{2} \iint_{\R^d\times \R^d} f(x)f(y) w(x-y) d x d y,
\end{align}
which can be proved rigorously. In fact, when $w$ is the Coulomb potential in $\R^3$, the lower bound in \eqref{eq:Con-int} is a direct consequence of the Lieb-Oxford inequality \cite{LieOxf-81}
\begin{align} \label{eq:Lieb-Oxford}
\mathcal{I}_N(f) \ge \frac{1}{2} \iint_{\R^3\times \R^3} f(x)f(y) |x-y|^{-1} d x d y - 1.68 N^{-2/3} \int_{\R^3} f^{4/3}
\end{align}
and the upper bound in \eqref{eq:Con-int} can be achieved easily by choosing a Slater determinant  (a wave function of the form $\Psi_N=u_1 \wedge u_2 \wedge ... \wedge u_N$) with the density $Nf$. The proof of \eqref{eq:Con-int} for more general $w$ can be extracted from the proof of our main result below. Other approaches to \eqref{eq:Con-int} based on the optimal transportation have recently attracted a lot of attention, see \cite{ButtPascGior-12,CotFriKlu-13,FMPCK-13,CotFriPas-15,MarGer-2015}.

 If we do not completely ignore the kinetic part, but take $h\to 0$ and fix $N$, then the Levy-Lieb functional functional $\cE_N(f)$ converges to the interaction functional $\mathcal{I}_N(f)$ in \eqref{eq:IN-classical}. Results of this kind can be found in remarkable recent works \cite{CotFriKlu-13,BinPas-17, Lewin-18}. 
 
 The significance of Conjecture \ref{conj-1}, as well as our main result below, is the fact that we take the proper semiclassical limit $h\sim N^{-1/d}$ as $N\to \infty$, which is crucial to obtain the full Thomas-Fermi functional.

\subsection{Main result} While we could not prove the pointwise-type convergence in Conjecture \ref{conj-1}, we will provide another justification for the Thomas-Fermi functional from  the Levy-Lieb functional in the sense of the Gamma-Convergence.

Recall that $\cE_N$ and $\cE^{\rm TF}$ are defined in \eqref{eq:LL-TF}. We have

\begin{theorem} [Gamma convergence from Levy-Lieb to Thomas-Fermi functional]\label{thm:Gamma-CV2} For all $d\ge 1$, in the limit $N\to \infty$, $hN^{1/d}\to 1$, $\lambda N \to 1$, the Levy-Lieb functional $\cE_N$ Gamma-converges to the Thomas-Fermi functional $\cE^{\rm TF}$ in 
\begin{align} \label{eq:def-B}
\mathcal{B}=\left\{ 0\le f\in L^1(\R^d)\cap L^{1+2/d}(\R^d), \int_{\R^d} f=1 \right\}.
\end{align}
More precisely, we have
\begin{itemize}

\item[(i)] {\rm (Lower bound)} For every sequence $f_N \in \mathcal{B}$ such that $f_N \wto f$ weakly in $L^{1+2/d}(\R^d)$,  then
\begin{align} \label{eq:Gcv-lower}
\liminf_{N\to \infty}\cE_N(f_N) \ge \cE^{\rm TF}(f). 
\end{align}

\item[(ii)] {\rm  (Upper bound)} For every $f\in \mathcal{B}$, there exists a sequence of Slater determinants $\Psi_N\in S_N$ such that $f_N=N^{-1}\rho_{\Psi_N}\to f$ strongly in $L^1(\R^d)\cap L^{1+2/d}(\R^d)$, and 
\begin{align} \label{eq:Gcv-upper}
\limsup_{N\to\infty} \cE_N(f_N)\le\cE^{\rm TF}(f).
\end{align}
\end{itemize}
\end{theorem}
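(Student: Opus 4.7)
The plan is to prove the liminf (i) and the limsup (ii) separately. For (i), I would split $H_N$ into kinetic, potential and interaction and apply the recent sharp Lieb--Thirring bound \eqref{eq:LT-N}, the Fefferman--de la Llave representation \eqref{eq:Fd-decomp}, and weak lower semicontinuity. For (ii), I would use a semiclassical Slater-determinant ansatz built from local plane waves.

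\emph{Lower bound.} Fix $\Psi_N \in S_N$ with $\rho_{\Psi_N}=Nf_N$ and $N^{-1}\langle\Psi_N, H_N\Psi_N\rangle=\cE_N(f_N)+o(1)$, and assume $\sup_N \cE_N(f_N)<\infty$. The kinetic part follows from \eqref{eq:LT-N}: since $(hN^{1/d})^2\to 1$,
\[
h^2N^{-1}\Big\langle\Psi_N, \sum_i (-\Delta_{x_i})\Psi_N\Big\rangle \ge (hN^{1/d})^2\cK_N(f_N) \ge (1+o(1))(K_{\rm cl}-\eps)\int_{\R^d} f_N^{1+2/d} + R_N,
\]
with error $R_N=-C_\eps(hN^{1/d})^2 N^{-2/d}\int|\nabla\sqrt{f_N}|^2$. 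After a semiclassical coarse-graining step (mollification of $f_N$ at scale $\ell_N$ with $N^{-1/d}\ll\ell_N\to 0$) compatible with the weak $L^{1+2/d}$-convergence, $R_N$ vanishes; weak lower semicontinuity of the $L^{1+2/d}$-norm and $\eps\to 0$ then yield $K_{\rm cl}\int f^{1+2/d}$. For the interaction, the decomposition \eqref{eq:Fd-decomp} together with the Cauchy--Schwarz inequality
\[
\Big\langle\Psi_N,\Big(\sum_i\chi_r(z-x_i)\Big)^2\Psi_N\Big\rangle \ge |\chi_r*\rho_{\Psi_N}(z)|^2
\]
and a cutoff at $r\ge R$ produce
\[
\Big\langle\Psi_N,\sum_{i<j}w(x_i-x_j)\Psi_N\Big\rangle \ge \frac{1}{2}\iint_{\R^d\times\R^d} w_{>R}(x-y)\rho_{\Psi_N}(x)\rho_{\Psi_N}(y)\,dx\,dy - \frac{N}{2}w_{>R}(0),
\]
with $w_{>R}=\int_R^\infty\chi_r*\chi_r\,dr$. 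Multiplying by $\lambda/N\sim N^{-2}$, the correction is $O(N^{-1})\to 0$; Fatou applied to $\int(\chi_r*f_N)^2\,dz$ (pointwise convergence of $\chi_r*f_N(z)$ holds under weak $L^{1+2/d}$-convergence of $f_N$) produces the liminf lower bound $\frac{1}{2}\iint w_{>R}(x-y)f(x)f(y)\,dx\,dy$, and monotone convergence as $R\to 0$ restores the full $w$. The potential term $\int Vf_N\to\int Vf$ uses $V\in L^p+L^q$ with $p,q\ge 1+d/2$, a density argument approximating $V$ by bounded compactly supported functions, and the uniform $L^1\cap L^{1+2/d}$-bound on $f_N$ inherited from the kinetic estimate.

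\emph{Upper bound.} By density, it suffices to treat $f\in C_c^\infty(\R^d)$ strictly positive on its support. I would construct $\Psi_N=u_1\wedge\cdots\wedge u_N$ by partitioning $\R^d$ into cubes $Q_\alpha$ of side $\ell_N$ with $N^{-1/d}\ll\ell_N\to 0$, and in each $Q_\alpha$ filling the $N_\alpha=\lfloor N\int_{Q_\alpha}f\rfloor$ lowest localized plane-wave states with momenta $|k|\le k_F(x_\alpha)=2\pi(Nf(x_\alpha)/(q|B_{\R^d}(0,1)|))^{1/d}$. A direct semiclassical computation shows that $N^{-1}\rho_{\Psi_N}\to f$ strongly in $L^1\cap L^{1+2/d}$, the kinetic energy is a Riemann sum for $NK_{\rm cl}\int f^{1+2/d}$, the direct interaction converges to $\tfrac{N}{2}\iint f(x)f(y)w(x-y)\,dx\,dy$, and the exchange correction is lower-order due to rapid decay of off-diagonal products of localized plane waves.

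The principal obstacles are threefold: controlling the kinetic error $R_N$ in the lower bound, since Hoffmann--Ostenhof only gives $\int|\nabla\sqrt{f_N}|^2=O(N^{2/d})$ and the coarse-graining must simultaneously preserve both the weak limit and the $L^{1+2/d}$-liminf; the Coulomb-type singularity of the interaction, forcing the truncation $r\ge R$ and a careful interchange of the limits $N\to\infty$ and $R\to 0$; and, on the upper-bound side, ensuring strong (not just weak) $L^1\cap L^{1+2/d}$-convergence of the constructed density while controlling cube-boundary and exchange effects uniformly in $\ell_N$.
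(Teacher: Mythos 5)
Your strategy for the kinetic lower bound has a genuine gap that the paper is specifically designed to avoid. You propose to use the sharp Lieb--Thirring bound \eqref{eq:LT-N}, which produces the error term $R_N=-C_\eps(hN^{1/d})^2 N^{-2/d}\int|\nabla\sqrt{f_N}|^2$. As you yourself note, Hoffmann--Ostenhof only gives $\int|\nabla\sqrt{f_N}|^2=O(N^{2/d})$, so $R_N=O(1)$ and does not vanish. The ``semiclassical coarse-graining'' you invoke to kill $R_N$ does not repair this: if you mollify $f_N$ into $\tilde f_N$, there is no relation between $\cK_N(f_N)$ and $\cK_N(\tilde f_N)$, because these are infima over wave functions with \emph{different} density constraints $\rho_{\Psi_N}=Nf_N$ versus $\rho_{\Psi_N}=N\tilde f_N$; you cannot transfer a lower bound on the mollified problem back to the one you actually need. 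The paper sidesteps this entirely in the proof of Theorem~\ref{thm:Gamma-CV1}: for any test potential $0\le U\in C_c^\infty$, one writes $h^2\langle\Psi_N,\sum(-\Delta)\Psi_N\rangle=\Tr[(-h^2\Delta-U)\gamma^{(1)}_{\Psi_N}]+N\int Uf_N$, uses the Pauli constraint $0\le\gamma^{(1)}_{\Psi_N}\le 1$ and the min-max principle to bound the trace below by the sum of negative eigenvalues, and then invokes Weyl's law \eqref{eq:Weyl}. This reduces the problem to passing to the limit in $\int Uf_N$, which is exactly what weak $L^{1+2/d}$-convergence gives, with no gradient error term at all. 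The key point is that the lower bound is obtained \emph{after} dualizing against $U$, not by bounding $\cK_N(f_N)$ directly by a functional of $f_N$ alone.

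On the interaction lower bound you are close to the paper's argument: both use the Fefferman--de la Llave decomposition \eqref{eq:Fd-decomp} and Cauchy--Schwarz on $\sum_i\chi_r(x_i-z)$. However, your cutoff $r\ge R$ is an unnecessary detour and introduces a new requirement, namely that $w_{>R}(0)=\int_R^\infty\|\chi_r\|_{L^2}^2\,dr$ be finite, which is not automatic from the stated hypotheses $\chi_r\in L^p+L^q$. The paper avoids this by noting that $\langle\Psi_N,\sum_{i<j}\chi_r(x_i-z)\chi_r(x_j-z)\Psi_N\rangle\ge\bigl[\tfrac{N^2}{2}(f_N*\chi_r)^2(z)-N(f_N*\chi_r^2)(z)\bigr]_+$; the positive part makes Fatou's lemma applicable directly over $(r,z)\in(0,\infty)\times\R^d$ in a single limit $N\to\infty$, without truncation and without any finiteness hypothesis on $w(0)$. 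Your upper-bound construction (box-localized filled Fermi seas plus dropping the nonnegative exchange term) is the same in spirit as the paper's, which uses Dirichlet eigenfunctions on disjoint cubes approximating $f$ by step functions and a diagonal argument; either would work, though the paper's choice makes the orthogonality and strong convergence of the densities in $L^1\cap L^{1+2/d}$ essentially automatic.
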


The notion of Gamma convergence is sufficient for many applications. In particular, we can come back to the ground state problem and immediately obtain

\begin{corollary}[Convergence of ground state energy and ground states] \label{cor:GSE}For all $d\ge 1$, in the limit $N\to \infty$, $hN^{1/d}\to 1$, $\lambda N \to 1$, the ground state energy $E_N^{\rm QM}$ of $H_N$ converges to the Thomas-Fermi energy:
\begin{align} \label{eq:cv-GSE}
\lim_{N\to \infty}\frac{E_N^{\rm QM}}{N}= E^{\rm TF} = \inf \left\{ \cE^{\rm TF}(f): f\in \mathcal{B} \right\}.
\end{align}
Moreover, if  $\Psi_N$ is a ground state for $E_N^{\rm QM}$ and if $f^{\rm TF}$ is a Thomas-Fermi minimizer, then  $N^{-1}\rho_{\Psi_N}\wto f^{\rm TF}$ weakly in $L^{1+2/d}(\R^d)$.
\end{corollary}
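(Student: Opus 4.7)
The corollary follows from the Gamma-convergence in Theorem \ref{thm:Gamma-CV2} by the standard strategy: pair each half of Gamma-convergence with a matching compactness statement on (near)-minimizing sequences. My plan is to prove the convergence of the ground state energies first, and then use the same compactness to identify the weak limits of the ground state densities.

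From \eqref{eq:EN-FN} together with the definition of $\cE_N$ in \eqref{eq:LL-TF} we have
$$
\frac{E_N^{\rm QM}}{N} = \inf\{ \cE_N(f) : f \in \mathcal{B} \}.
$$
The upper bound $\limsup_N E_N^{\rm QM}/N \le E^{\rm TF}$ is immediate from part (ii) of Theorem \ref{thm:Gamma-CV2}: for every $f \in \mathcal{B}$ the recovery sequence of Slater determinants $\Psi_N \in S_N$ gives $\limsup_N \cE_N(N^{-1}\rho_{\Psi_N}) \le \cE^{\rm TF}(f)$, and taking the infimum over $f \in \mathcal{B}$ yields the claim.

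For the matching lower bound I would pick a near-minimizing sequence $\Psi_N \in S_N$, let $f_N := N^{-1}\rho_{\Psi_N} \in \mathcal{B}$, and extract a weak $L^{1+2/d}$-limit. The crucial compactness input is to bound $\cE_N(f_N)$ from below by the kinetic term alone via the Lieb--Thirring-type inequality \eqref{eq:LT-N}, and to control the two potential terms by H\"older using the constraint $\|f_N\|_{L^1} = 1$ together with the hypothesis $V, w \in L^p + L^q$ with $p, q \ge 1 + d/2$; this yields a uniform bound on $\|f_N\|_{L^{1+2/d}}$. Banach--Alaoglu then gives a subsequence with $f_N \wto f \ge 0$ weakly in $L^{1+2/d}$ and $\int f \le 1$, and part (i) of Theorem \ref{thm:Gamma-CV2} gives $\liminf_N \cE_N(f_N) \ge \cE^{\rm TF}(f)$. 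To convert $\cE^{\rm TF}(f)$ into $E^{\rm TF}$ despite a possible deficit $\int f < 1$, I would use a mass-completion argument: append to $f$ a bump of mass $1 - \int f$ supported far from the origin; since $V$ and $w$ vanish at infinity (they belong to $L^p + L^q$ with finite $p, q$, and $w$ admits the decomposition \eqref{eq:Fd-decomp}), this contributes an arbitrarily small excess to $\cE^{\rm TF}$, so $\cE^{\rm TF}(f) \ge E^{\rm TF}$.

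For the convergence of ground states, take an actual ground state $\Psi_N$ and run the same compactness argument on $f_N = N^{-1}\rho_{\Psi_N}$. After extraction, $f_N \wto f$ weakly in $L^{1+2/d}$ for some $f$ with $\int f \le 1$; the lower bound of Theorem \ref{thm:Gamma-CV2} together with the already-established convergence $E_N^{\rm QM}/N \to E^{\rm TF}$ forces $\cE^{\rm TF}(f) \le E^{\rm TF}$. Combined with the mass-completion argument of the previous paragraph this identifies $f$ as a Thomas--Fermi minimizer, and when the TF minimizer is unique the full sequence converges weakly to $f^{\rm TF}$. The main obstacle, as already flagged, is the possible loss of mass at infinity in the passage from $f_N$ to its weak limit $f$; handling it cleanly requires exploiting the decay of $V$ and $w$ at infinity encoded in the $L^p + L^q$ assumption, and is the only substantive step beyond the general Gamma-convergence machinery.
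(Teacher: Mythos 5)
Your proposal follows the paper's route closely: express $E_N^{\rm QM}/N$ as $\inf_{\mathcal B}\cE_N$, get the upper bound from the recovery sequences of Theorem~\ref{thm:Gamma-CV2}(ii), get compactness for a near-minimizing sequence from a Lieb--Thirring bound combined with H\"older (the paper uses the non-sharp Lieb--Thirring constant and the positivity $w\ge 0$; your use of \eqref{eq:LT-N} is a harmless variant), pass to a weak limit $f$ with $\int f\le 1$, apply the $\Gamma$-$\liminf$ from Theorem~\ref{thm:Gamma-CV2}(i), and repair the possible mass defect by showing that minimizing $\cE^{\rm TF}$ over $\{\int g\le 1\}$ gives the same value $E^{\rm TF}$. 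There are however two points where your write-up is incomplete relative to what is actually needed.

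First, in the mass-completion step you attribute the smallness of the excess energy solely to the decay at infinity of $V$ and $w$. That decay does control the cross terms (the added bump against $V$ and against the existing mass $g$), but it does nothing to the bump's own Thomas--Fermi kinetic energy $K_{\rm cl}\int\varphi^{1+2/d}$ and its self-interaction $\tfrac12\iint\varphi(x)\varphi(y)w(x-y)\,dx\,dy$, neither of which involves the position of the bump. To make these vanish one must, as the paper does, rescale $\varphi\mapsto \ell^{d}\varphi(\ell\,\cdot)$ and send $\ell\to 0$; both self-energy terms then go to zero (using Young's inequality and $w\in L^p+L^q$ for the interaction). Without this scaling your estimate $\cE^{\rm TF}(f)\ge E^{\rm TF}$ is not established.

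Second, to conclude that \emph{every} weak subsequential limit equals $f^{\rm TF}$ (hence full-sequence convergence) you need uniqueness of the minimizer of $\cE^{\rm TF}$ on $\{\int g\le 1\}$, not merely a conditional ``when the minimizer is unique.'' The paper derives it from strict convexity of $\cE^{\rm TF}$: the kinetic term $g\mapsto K_{\rm cl}\int g^{1+2/d}$ is strictly convex, and the interaction is convex because, by \eqref{eq:Fd-decomp},
\begin{align*}
\frac{1}{2}\iint g(x)g(y)w(x-y)\,dx\,dy=\frac12\int_0^\infty\!\!\int_{\R^d}|(g*\chi_t)(z)|^2\,dz\,dt\ge 0.
\end{align*}
This is precisely where the positive-definiteness hypothesis on $w$ enters and is the reason the corollary's conclusion identifies the weak limit with $f^{\rm TF}$; you should state and use it rather than assume uniqueness. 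With these two gaps filled, your argument coincides with the paper's proof.
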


Corollary \ref{cor:GSE} covers the seminal result of Lieb and Simon \cite{LieSim-77} on the validity of Thomas-Fermi in the atomic case (when $V,w$ are Coulomb potentials in $\R^3$).

More general results on the ground state problem have been achieved recently by Fournais, Lewin and Solovej \cite{FouLewSol-15} by means of other techniques. Their method is based on a fermionic version of the de Finetti-Hewitt-Savage theorem for classical measures (which should be compared to a weak quantum de Finetti theorem for bosons \cite{LewNamRou-14}, although the classical version is sufficient for fermions). They can treat a very large class of interaction potentials; in particular no form of positivity, e.g. \eqref{eq:Fd-decomp}, is needed. In fact, negative interaction potentials can be handled by a clever technique of interchanging two-body to one-body potentials. The latter technique goes back to \cite{DysLen-67,LevLeb-69,LieThi-84,LieYau-87} and seems rather specific for the ground state problem.   

In contrast, our Theorem \ref{thm:Gamma-CV2} applies to a more restrictive class of interaction potentials, but it is not limited to ground states (it can be applied to excited, or high energy states as well).

The rest of the paper is devoted to the proof of our main result. First, we will study the ideal gas separately in Section \ref{sec:kinetic}. Then the proof of Theorem \ref{thm:Gamma-CV2} and Corollary \ref{cor:GSE} are given in Section \ref{sec:full}. 

In the following proof, we consider spinless particles for simplicity (adding a fixed spin $q\ge 1$ requires only straightforward modifications).

%%%%%%%%%%%%%%%%%%%%%%%%%%%%%%%%%%%%%% %%%%%%%%
%%%%%%%%%%%%%%%%%%%%%%%%%%%%%%%%%%%%%%%%%%%%%%
\section{Kinetic density functional} \label{sec:kinetic}

In this section we prove Theorem \ref{thm:Gamma-CV2} in the special case of the ideal Fermi gas, which has its own interest. Recall the kinetic density functional in \eqref{eq:def-cKN}
\begin{align*}
\cK_N(f)= \inf \left\{ \frac{1}{N^{1+2/d}} \left\langle \Psi_N, \sum_{i=1}^N (-\Delta_{x_i}) \Psi_N \right\rangle: \Psi_N \in S_N,\rho_{\Psi_N}=Nf \right\}
\end{align*}
and 
$$
\mathcal{B}=\left\{ 0\le f\in L^1(\R^d)\cap L^{1+2/d}(\R^d), \int_{\R^d} f=1 \right\}.
$$

We will prove

\begin{theorem} [Gamma convergence of kinetic density functional]\label{thm:Gamma-CV1} For all $d\ge 1$, the following convergences hold when $N\to \infty$. 

\begin{itemize}
\item[(i)] {\rm (Lower bound)} If $f_N \in \mathcal{B}$ and $f_N \wto f$ weakly in $L^{1+2/d}(\R^d)$,  then
\begin{equation} \label{eq:kinetic-lower}
\liminf_{N\to \infty}\cK_N(f_N) \ge K_{\rm cl} \int_{\R^d} f^{1+2/d}. 
\end{equation}

\item[(ii)] {\rm  (Upper bound)} For every $f\in \mathcal{B}$, there exists a sequence of Slater determinants $\Psi_N\in S_N$ such that $f_N=N^{-1}\rho_{\Psi_N}\to f$ strongly in $L^1(\R^d)\cap L^{1+2/d}(\R^d)$, and 
\begin{equation} \label{eq:kinetic-upper}
\limsup_{N\to\infty} \cK_N(f_N) \le \limsup_{N\to\infty} \frac{1}{N^{1+2/d}} \left\langle \Psi_N, \sum_{i=1}^N -\Delta_{x_i} \Psi_N \right\rangle  \le K_{\rm cl} \int_{\R^d} f^{1+2/d}.
\end{equation}
\end{itemize}
\end{theorem}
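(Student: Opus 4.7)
For part (ii), the plan is to construct a trial Slater determinant explicitly. A density argument in $\mathcal{B}\cap L^{1+2/d}$ combined with diagonal extraction reduces matters to $f$ smooth, bounded, and compactly supported. Partition $\R^d$ into cubes $Q_k$ of side length $\ell_N$ with $\ell_N\to 0$ and $\ell_N N^{1/d}\to\infty$, and in each $Q_k$ place the lowest $n_k := \lfloor N\int_{Q_k}f\rfloor$ orthonormal Dirichlet eigenfunctions of $-\Delta$ on $Q_k$, extended by zero. The total collection is orthonormal in $L^2(\R^d)$, so its antisymmetric product defines a Slater determinant $\Psi_N$. Weyl asymptotics give $\sum_{j=1}^{n_k}\lambda_j(Q_k) = (K_{\rm cl}+o(1)) n_k^{1+2/d}|Q_k|^{-2/d}$ as $n_k\to\infty$ (ensured by $\ell_N N^{1/d}\to\infty$), and summing over $k$ produces a Riemann sum converging to $K_{\rm cl}N^{1+2/d}\int f^{1+2/d}$. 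Meanwhile $\rho_{\Psi_N}/N$ is well-approximated in $L^1\cap L^{1+2/d}$ by the piecewise-constant function $\sum_k(n_k/(N|Q_k|))\1_{Q_k}\to f$.

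For part (i), the plan is a semiclassical coherent-state analysis at scale $h_N := N^{-1/d}$. Let $\gamma_N$ be the one-body density matrix of a near-minimizer for $\cK_N(f_N)$; the Pauli principle gives $0\leq \gamma_N\leq 1$ with $\rho_{\gamma_N} = Nf_N$. Fix a real $g\in C_c^\infty(\R^d)$ with $\|g\|_{L^2}=1$, and introduce coherent states $G_{x,p}^{h_N}(y) := h_N^{-d/4}g((y-x)/\sqrt{h_N})e^{ip\cdot y/h_N}$ with Husimi transform $m_N(x,p) := \langle G_{x,p}^{h_N},\gamma_N G_{x,p}^{h_N}\rangle$; Pauli propagates to $0\leq m_N\leq 1$. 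Standard identities then yield
\begin{align*}
\int \frac{m_N(x,p)}{(2\pi h_N)^d}\,dp &= N\,(f_N*|g_{h_N}|^2)(x),\\
\tr(-h_N^2\Delta\gamma_N) &= \int p^2 m_N(x,p)\,\frac{dx\,dp}{(2\pi h_N)^d} - h_N N\|\nabla g\|_{L^2}^2,
\end{align*}
where $|g_{h_N}|^2(y) = h_N^{-d/2}|g|^2(y/\sqrt{h_N})$ is an approximate identity of total mass $1$. Applying the bathtub principle in $p$ to $m_N(x,\cdot)\in[0,1]$, integrating in $x$, and using $h_N^2 N^{1+2/d} = N$ together with the definition of $K_{\rm cl}$, one obtains
$$\cK_N(f_N) \geq K_{\rm cl}\int_{\R^d}(f_N*|g_{h_N}|^2)^{1+2/d}\,dx - N^{-1/d}\|\nabla g\|_{L^2}^2.$$

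To pass to the limit, one may assume $\cK_N(f_N)$ is bounded (else nothing to prove), and the non-sharp Lieb-Thirring inequality provides a uniform $L^{1+2/d}$ bound on $f_N$. A standard test-function argument then promotes $f_N\wto f$ in $L^{1+2/d}$ to $f_N*|g_{h_N}|^2\wto f$ in $L^{1+2/d}$, and weak lower semicontinuity of $u\mapsto\int u^{1+2/d}$ concludes $\liminf\cK_N(f_N)\geq K_{\rm cl}\int f^{1+2/d}$. The main technical subtlety lies precisely here: Jensen's inequality $\int(f_N*|g_{h_N}|^2)^{1+2/d}\leq\int f_N^{1+2/d}$ runs in the unhelpful direction, so the smoothing cannot be naively discarded; it is the vanishing scale $\sqrt{h_N}\to 0$ that preserves the weak limit, and weak LSC then delivers the sharp constant $K_{\rm cl}$. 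A secondary technical point in (ii) is controlling $\rho_{\Psi_N}$ in $L^{1+2/d}$ (as opposed to its piecewise-constant cube average) in the face of Dirichlet-boundary and Friedel-type oscillations; finer Weyl density estimates, or alternatively a coherent-state construction of $\gamma_N$ followed by a Lieb-type passage to a pure Slater state, resolve this.
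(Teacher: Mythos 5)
Your construction for the upper bound (ii) is essentially the paper's: fill a cube decomposition with the lowest Dirichlet eigenfunctions, use Weyl asymptotics for the cube, and observe that the Slater density converges to the piecewise-constant average. The paper organizes the double limit slightly differently (fix a finite cube family achieving a step-function approximation of $f$ at accuracy $1/k$, send $N\to\infty$ for each fixed $k$, then diagonalize), whereas you let $\ell_N\to 0$ at a rate tied to $N$; both work and are not substantively different. One small bookkeeping point you gloss over: with $n_k=\lfloor N\int_{Q_k}f\rfloor$ the total particle number need not be exactly $N$, so one must adjust some $n_k$ by at most one, as the paper does explicitly.

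For the lower bound (i), your route is genuinely different from the paper's and worth comparing. The paper dualizes against an arbitrary test potential $0\le U\in C_c^\infty$: writing $\tr[-h^2\Delta\,\gamma^{(1)}]=\tr[(-h^2\Delta-U)\gamma^{(1)}]+N\int Uf_N$, using Pauli $0\le\gamma^{(1)}\le 1$ to bound the first term below by $\tr[-h^2\Delta-U]_-$, invoking Weyl's law for the sum of negative eigenvalues, passing to the limit via $f_N\wto f$, and finally optimizing over $U$ (a Legendre-transform step producing $K_{\rm cl}\int f^{1+2/d}$). You instead run a direct coherent-state/Husimi argument at scale $\sqrt{h_N}$: Pauli gives $0\le m_N\le 1$, the bathtub principle in $p$ produces the sharp semiclassical constant applied to the mollified density $f_N*|g_{h_N}|^2$, and the vanishing mollification scale lets the $L^{1+2/d}$ weak limit of $f_N*|g_{h_N}|^2$ coincide with $f$, after which weak lower semicontinuity of the convex energy concludes. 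Both arguments hinge on the same two facts --- Pauli exclusion and semiclassical phase-space counting --- but the paper's proof is shorter if Weyl's law is taken as a black box and avoids the mollification/weak-LSC passage, while yours is more self-contained (Weyl's law is not invoked), yields a quantitative pre-limit estimate with an explicit error $N^{-1/d}\|\nabla g\|_{L^2}^2$, and makes the role of the scale $h_N\sim N^{-1/d}$ transparent. Both are correct. Note that the uniform $L^{1+2/d}$ bound on $f_N$ you extract from Lieb--Thirring is already contained in the hypothesis $f_N\wto f$ via uniform boundedness, so that detour is unnecessary (though harmless). Also, for your momentum integral identity to give a convolution rather than a correlation, take $g$ even (or real even), which is a free normalization.
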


\begin{proof} {\bf Lower bound.} The lower bound \eqref{eq:kinetic-lower} is a consequence of Weyl's law for Schr\"odinger eigenvalues. Let $\Psi_N$ be a $N$-body wave function with density $\rho_{\Psi_N}=Nf_N$. We can define the one-body density matrix $\gamma_{\Psi_N}$ as a trace class operator on $L^2(\R^d)$ with kernel
\begin{align} \label{eq:def-1pdm}
\gamma_{\Psi_N}^{(1)}(x,y)= \int_{(\R^d)^{N-1}} \Psi_N(x,x_2,...,x_N) \overline{\Psi_N (y,x_2,...,x_N)} d x_2... dx_N. 
\end{align}
Then for every function $0\le U\in C_c^\infty(\R^d)$, we can write
\begin{align} \label{eq:lower-easy}
h^2 \left\langle \Psi_N, \sum_{i=1}^N -\Delta_{x_i} \Psi_N \right\rangle = \Tr\left[(-h^2 \Delta - U)\gamma_{\Psi_N}^{(1)}\right] + N \int_{\R^d} U f_N.
\end{align}
On the other hand, the anti-symmetry of $\Psi_N$ implies Pauli's exclusion principle \cite[Theorem 3.2]{LieSei-10}
\begin{align} \label{eq:Pauli}
0\le \gamma_{\Psi_N}^{(1)} \le 1.
\end{align}
Consequently, by the min-max principle  \cite[Theorem 12.1]{LieLos-01} and Weyl's law on the sum of negative eigenvalues of Schr\"odinger operators \cite[Theorem 12.12]{LieLos-01} we can estimate
\begin{align} \label{eq:Weyl}
\Tr\left[(-N^{-2/d} \Delta - U)\gamma_{\Psi_N}^{(1)}\right] &\ge \Tr[-N^{-2/d} \Delta - U]_-\\ 
&= - N\frac{|B_{\R^d}(0,1)|}{(2\pi)^d(1+d/2)} \left[ \int_{\R^d} U^{1+d/2} + o(1)_{N\to \infty}\right]. \nn
\end{align}
From \eqref{eq:lower-easy} and \eqref{eq:Weyl}  we deduce that
$$
\liminf_{N\to \infty} \cK_N(f_N) \ge - \frac{|B_{\R^d}(0,1)|}{(2\pi)^d(1+d/2)} \int_{\R^d} U^{1+d/2} + \int_{\R^d} U f.
$$
Optimizing over $U$ leads to the desired lower bound \eqref{eq:kinetic-lower}. 

\bigskip

\noindent {\bf Upper bound.} We can follow the coherent state approach in the proof of Weyl's law \cite[Theorem 12.12]{LieLos-01} to deduce the upper bound \eqref{eq:kinetic-upper}, but the important conclusion that the density $Nf_N$ comes from a Slater determinant is not easily achieved in this way. In the following, we will provide a direct proof of the upper bound in Theorem \ref{thm:Gamma-CV1}, using explicit computations of the ground states of the ideal Fermi gas in cubes. This idea goes back to the heuristic argument of Thomas-Fermi  \cite{Thomas-27,Fermi-27} and March-Young \cite{MarYou-58} (the same argument can be used to give a direct proof of the lower bound \eqref{eq:kinetic-lower}; see \cite{Gottschling-18} for details). 
 
\medskip 

\noindent
 \textbf{Step 1 (Slater determinants with step-function densities).}  Recall that that the Dirichlet Laplacian $-\Delta$ on the cube $Q=[0,L]^d$ has eigenvalues $|\pi k/L|^2$, $k\in \mathbb{N}^d$, with eigenfunctions
 $$
u_k(x)= \prod_{i=1}^d \left[ \sqrt{\frac{2}{L}}\sin\bigg(\frac{\pi k^i x^i}{L}\bigg) \right], \quad k=(k^i)_{i=1}^d, x=(x^i)_{i=1}^d \in \R^d.
$$
The ground state of the $M$-body kinetic operator $\sum_{j=1}^M (-\Delta_{x_j})$ is the Slater determinant $\Psi_M^{\rm S}$ made of the first $M$ eigenfunctions $\{u_k\}$. It is straightforward to see that when $M\to \infty$,
\begin{align}
		 \frac{1}{M}\rho_{\Psi_M^{S}}= \frac{1}{M}\sum_{k\in S_M} |u_k|^2 \to \frac{\1_Q}{|Q|} 
		 \label{eq:simplecube1}
		\end{align}
				strongly in $L^p(Q)$ for all $1\leq p<\infty$, and
				\begin{align}\label{eq:simplecube2}
		\frac{1}{M^{1+2/d}} \left\langle \Psi_M^{S},\sum_{i=1}^M(-\Delta_{x_i}) \Psi_M^{S}\right\rangle = \frac{1}{M^{1+2/d}} \sum_{k\in S_M} \left|\frac{\pi k}{L}\right|^2  \to \frac{K_{cl}}{|Q|^{2/d}}.
		\end{align}

Now let $f\in \mathcal{B}$. Let $\{Q\}$ be a finite family of disjoint cubes, whose construction will be specified in the next step. In the following we only consider cubes $Q$ such that 
$$\overline{f}^Q := \frac{1}{|Q|}\int_Q f>0,.$$ We can find an integer number
$M_Q\in (N |Q| \overline{f}^Q -1, N|Q| \overline{f}^Q+1]$
such that 
$$\sum_{Q}M_Q= \sum_{Q} N |Q| \overline{f}^Q =  N \int_{\R^d} f=1.$$
	
Now for every $Q$, consider the first $M_Q$ eigenfunctions $\{u_j^Q\}_{j=1}^{M_Q}$ of the Dirichlet Laplacian $-\Delta$ on $Q$. These functions can be trivially extended to zero outside $Q$ to become a function in $H^1_0(Q_0)$. Since $\{u_j^Q\}_{j=1}^{M_Q}$ are orthogonal for every $Q$ and the subcubes $\{Q\}$ are disjoint, the collection $\bigcup_{Q} \{u_j^Q\}_{j=1}^{M_Q}$ is an orthonormal family of $N$ functions in $H^1_0(Q_0)$. Let $\Psi_N^{S}\in S_N$ be the Slater determinant made of this orthogonal family. Then in the limit $N\to \infty$, using the fact that $M_Q/N\to |Q| \overline{f}^Q >0$ and  \eqref{eq:simplecube1}, \eqref{eq:simplecube2}, we get the following 
		
		\begin{align}\label{eq:cube1}
		\frac{1}{N}\rho_{\Psi_N^S} =  \sum_Q \sum_{i=1}^{M_Q}  \frac{|u_i^Q|^2}{M_Q} \cdot \frac{M_Q}{N} & \to \sum_Q \frac{\1_Q}{|Q|} \cdot |Q|\overline{f}^Q = \sum_Q \1_Q \overline{f}^Q 
		\end{align}
		strongly in $L^p(\R^d)$ for all $1\leq p<\infty$, and
		\begin{align}\label{eq:cube2}
		&\frac{1}{N^{1+2/d}} \left\langle \Psi_N^{S}, \sum_{i=1}^N - \Delta_{x_i} \Psi_N^{S} \right\rangle  = \frac{1}{N^{1+2/d}} \sum_Q \sum_{i=1}^{M_Q}\lVert\nabla u_i^Q\rVert^2 \nn\\
		&= \sum_Q \left[ \frac{1}{M_Q^{1+2/d}} \sum_{i=1}^{M_Q}\lVert\nabla u_i^Q\rVert^2 \right] \left|\frac{M_Q}{N}\right|^{1+2/d} \to \sum_Q \frac{K_{\rm cl}}{|Q|^{2/d}} \cdot \left| |Q| \overline{f}^Q \right|^{1+2/d} \nn\\
		&= K_{\rm cl} \int_{\R^d} \Big| \sum_Q \1_Q \overline{f}^Q \Big|^{1+2/d} \le K_{\rm cl} \int_{\R^d} f^{1+2/d} . %\Big| \sum_Q  \1_Q \overline{f}^Q\Big|^{1+2/d}.
		\end{align}
\noindent		
\textbf{Step 2 (Approximating $f$ by step functions and concluding).} Since $0\le f\in L^1(\R^d)\cap L^{1+2/d}(\R^d)$, for every $k \ge 1$ we can find a finite family of disjoint cubes $\{Q\}$ such that  
\begin{align} \label{eq:1Q-f<e}
\left\|f - \sum_{Q} \1_Q \overline{f}^Q \right\|_{L^1} + \left\|f - \sum_{Q} \1_Q \overline{f}^Q \right\|_{L^{1+2/d}} \le k^{-1}.  
\end{align}
Using this collection of cubes, for every $N\ge 1$ we can construct a Slater determinant $\Psi_N^\eps \in S_N$ as in Step 2. From the convergence \eqref{eq:cube1}, \eqref{eq:cube2}, we deduce that there exists $M_k >0$ such that for every $N\ge M_k$, 
\begin{align} \label{eq:kin-upp-kin}
\frac{1}{N^{1+2/d}} \left\langle \Psi_N^k, \sum_{i=1}^N - \Delta_{x_i} \Psi_N^k \right\rangle \le K_{\rm cl} \int_{\R^d} f^{1+2/d} + k^{-1}
\end{align}
and
$$
\left\| \frac{1}{N}\rho_{\Psi_N^k} - \sum_Q \1_Q \overline{f}^Q  \right\|_{L^1} + \left\| \frac{1}{N}\rho_{\Psi_N^k} - \sum_Q \1_Q \overline{f}^Q  \right\|_{L^{1+2/d}} \le k^{-1}.
$$
The latter estimate and \eqref{eq:1Q-f<e} imply that for every $N\ge M_k$,
\begin{align} \label{eq:kin-upp-den}
\left\| \frac{1}{N}\rho_{\Psi_N^k} - f  \right\|_{L^1} +  \left\| \frac{1}{N}\rho_{\Psi_N^k} - f  \right\|_{L^{1+2/d}}   \le 2 k^{-1}. 
\end{align}

Now we conclude using a standard diagonal argument. By induction we can choose the above sequence $M_k$ such that $M_{k+1}\ge M_k+1$. Now for every $N\in \mathbb{N}$, we take $k=k_N$ the smallest number such that $N\ge M_k$. Obviously we have $k_N\to \infty$ as $N\to \infty$. Moreover, we can choose the Slater determinant $\Psi_N=\Psi_N^{k_N}\in S_N$ as above, and obtain from \eqref{eq:kin-upp-kin}, \eqref{eq:kin-upp-den} that 
$$
\frac{1}{N^{1+2/d}} \left\langle \Psi_N, \sum_{i=1}^N - \Delta_{x_i} \Psi_N \right\rangle \le K_{\rm cl} \int_{\R^d} f^{1+2/d} + k_N^{-1} \to K_{\rm cl} \int_{\R^d} f^{1+2/d} 
$$
and
$$
\left\| \frac{1}{N}\rho_{\Psi_N} -f \right\|_{L^1} + \left\| \frac{1}{N}\rho_{\Psi_N} - f \right\|_{L^{1+2/d}} \le 2k_N^{-1} \to 0
$$
when $N\to \infty$. This completes the proof of Theorem \ref{thm:Gamma-CV1}. 
\end{proof}

\section{Full density functional} \label{sec:full}

In this section we prove our main result. 

\begin{proof}[Proof of Theorem \ref{thm:Gamma-CV2}]  {\bf Lower bound.}  Let $\Psi_N\in S_N$ such that $\rho_{\Psi_N}=Nf_N$ and $f_N\wto f$ weakly in $L^{1+2/d}(\R^d)$.  By Theorem \ref{thm:Gamma-CV1}, we have
\begin{align} \label{eq:proof-main-1}
\liminf_{N\to \infty} N^{-1-2/d} N^{2/d}h^2 \left\langle \Psi_N, \sum_{i=1}^N -\Delta_{x_i} \Psi_N \right\rangle \ge K_{\rm cl} \int_{\R^d} f^{1+2/d}. 
\end{align}

Moreover, since $f_N\wto f$ weakly in $L^{1+2/d}(\R^d)$ and $\|f_N\|_{L^1}=1$, by interpolation we have  $f_N\wto f$ weakly in $L^{r}(\R^d)$ for all $r \in (1, 1+2/d]$. Under the condition $V\in L^p(\R^d)+L^q(\R^d)$ with $p,q\in [1+d/2,\infty)$, we deduce that
\begin{align} \label{eq:proof-main-2}
\lim_{N\to \infty} N^{-1} \left\langle \Psi_N, \sum_{i=1}^N V(x_i) \Psi_N \right\rangle =  \lim_{N\to \infty} \int_{\R^d} Vf_N = \int_{\R^d} Vf. 
\end{align}

It remains to consider the interaction terms.  We will use an idea of Lieb, Solovej and Yngvason \cite{LieSolYv-95}, 
%(similar idea was already used by Bach \cite{Bach-92} to seperate variables and find a lower bound)
which has been used to give an alternative proof of the Lieb-Oxford inequality. From the Fefferman-de la Llave type decomposition \eqref{eq:Fd-decomp}, we can write
\begin{align} \label{eq:FeffDela-xy}
	w(x-y) = \int_0^{\infty} dr   \int_{\R^d} dz \chi_r(x-z) \chi_r(y-z) dz 
\end{align}
	and hence 
	\begin{align*}
	\left\langle \Psi_N, \sum_{1\le i<j\le N}w(x_i-x_j)\Psi_N \right\rangle= 
	 \int_0^{\infty} dr \int_{\R^d} dz \left  \langle  \Psi_N, \sum_{1\le i<j\le N}\chi_r(x_i-z) \chi_r(x_j-z)\Psi_N \right\rangle .
	\end{align*}
By the Cauchy-Schwarz inequality we get 
	\begin{align*}
	 &\left  \langle  \Psi_N, \sum_{1\le i<j\le N}\chi_r(x_i-z) \chi_r(x_j-z)\Psi_N \right\rangle \\
	 &= \left[ \frac{1}{2} \left  \langle  \Psi_N, \Big( \sum_{i=1}^N \chi_r(x_i-z) \Big)^2 \Psi_N \right\rangle - \left  \langle  \Psi_N, \sum_{i=1}^N \chi_r^2(x_i-z)  \Psi_N \right\rangle \right]_+\\
	 &\ge \left[ \frac{1}{2} \left  \langle  \Psi_N, \sum_{i=1}^N \chi_r(x_i-z)  \Psi_N \right\rangle^2 - \left  \langle  \Psi_N, \sum_{i=1}^N \chi_r^2(x_i-z)  \Psi_N \right\rangle \right]_+\\
	 &= \left[ \frac{N^2}{2}(f_N*\chi_r)^2(z) - N (f_N*\chi_r^2)(z) \right]_+.
	\end{align*}
	For every fixed $r>0$ and $z\in \mathbb{\R^d}$, since $f_N\wto f$ weakly in $L^{r}(\R^d)$ for all $1<r\le 1+2/d$, and $\chi_r,\chi_r^2 \in L^p(\R^d)+L^q(\R^d)$ with $p,q\in [1+d/2,\infty)$, we find that
	\begin{align*}
	\lim_{N\to \infty} (f_N*\chi_r)(z)  &= (f*\chi_r)(z), \\
	\lim_{N\to \infty} (f_N*\chi_r^2)(z)  &= (f*\chi_r^2)(z),
	\end{align*}
	and hence
	$$
	\lim_{N\to \infty} N^{-2} \lambda N \left[ \frac{N^2}{2}(f_N*\chi_r)^2(z) - N (f_N*\chi_r^2)(z) \right]_+ = \frac{1}{2}(f*\chi_r)^2(z)
	$$
	for every $z\in \R^d$. Therefore, by Fatou's lemma,
	\begin{align} \label{eq:proof-main-3}
	&\liminf_{N\to \infty} N^{-2} \lambda N  \left\langle \Psi_N, \sum_{1\le i<j\le N}w(x_i-x_j)\Psi_N \right\rangle\nn\\
	 &=\liminf_{N\to \infty} \int_0^{\infty} dr \int_{\R^d} dz  N^{-2} \lambda N \left[ \frac{N^2}{2}(f_N*\chi_r)^2(z) - N (f_N*\chi_r^2)(z) \right]_+ \nn\\
	 &\ge \int_0^{\infty} dr \int_{\R^d} dz \frac{1}{2}(f*\chi_r)^2(z)= \frac{1}{2} \iint_{\R^d\times \R^d} f(x)f(y) w(x-y) d x dy. 
	\end{align}
	Here in the last identity we have used \eqref{eq:Fd-decomp} again.

	Putting \eqref{eq:proof-main-1}, \eqref{eq:proof-main-2} and \eqref{eq:proof-main-3} together, we conclude that
	$$
	\liminf_{N\to \infty} N^{-1}\langle \Psi_N, H_N \Psi_N\rangle \ge \cE^{\rm TF}(f). 
	$$
	Since $\Psi_N\in S_N$ can be chosen arbitrarily under the sole condition $\rho_{\Psi_N}=Nf_N$, this leads the desired lower bond
	$$
	\liminf_{N\to \infty} \cE_N (f_N) \ge \cE^{\rm TF}(f). 
	$$
\noindent{\bf Upper bound.} Let $0\le f\in L^1(\R^d)\cap L^{1+2/d}(\R^d)$ with $\int_{\R^d} f=1$. Then by Theorem \ref{thm:Gamma-CV1} there exists a sequence of Slater determinants $\Psi_N\in S_N$, such that 
	$f_N:= N^{-1}\rho_{\Psi_N} \to f$ strongly in $L^1(\R^d)\cap L^{1+2/d}(\R^d)$ and 
	\begin{align}\label{eq:proof-main-upper-1}
	\limsup_{N\to \infty} N^{-1-2/d} N^{2/d}h^2  \left\langle \Psi_N, \sum_{i=1}^N (-\Delta_{x_i}) \Psi_N \right\rangle \le K_{\rm cl}\int_{\R^d} f^{1+2/d}. 
	\end{align}
	
	Since $f_N\to f$ in $L^r(\R^d)$ for all $r\in [1,1+2/d]$ and $V\in L^p(\R^d)+L^q(\R^d)$ with $p,q\in [1+d/2,\infty)$, we have
	\begin{align} \label{eq:proof-main-upper-2}
\lim_{N\to \infty} N^{-1} \left\langle \Psi_N, \sum_{i=1}^N V(x_i) \Psi_N \right\rangle =  \lim_{N\to \infty} \int_{\R^d} Vf_N = \int_{\R^d} Vf. 
\end{align}
	
	Finally, for the interaction terms, since $\Psi_N$ is a Slater determinants and $w$ is non-negative, an explicit computation shows that 
	\begin{align} \label{eq:proof-main-upper-3}
	N^{-2} \lambda N \left\langle \Psi_N, \sum_{1\le i<j\le N}w(x_i-x_j)\Psi_N \right\rangle\le \frac{1}{2} \lambda N \iint_{\R^d\times \R^d} f_N(x)f_N(y) w(x-y) d x dy. 
	\end{align}
	Here since $w\ge 0$ we can simply ignored the exchange term in the Hartree-Fock functional to get an upper bound (see e.g. \cite[Section 5A]{Lieb-83} for details). The convergence $f_N\to f$ in $L^1(\R^d)\cap L^{1+2/d}(\R^d)$ and the assumption $w\in L^p(\R^d)+L^q(\R^d)$ imply that $f_N*w\to f*w$ strongly in $L^\infty(\R^d)$, and hence
	\begin{align*}
	&\frac{1}{2} \lambda N \iint_{\R^d\times \R^d} f_N(x)f_N(y) w(x-y) d x dy = \frac{1}{2} \lambda N \int_{\R^d} f_N (f_N*w) \\
	&\to  \frac{1}{2} \int_{\R^d} f (f*w) =  \frac{1}{2} \iint_{\R^d\times \R^d} f(x)f(y) w(x-y) d x dy.
	\end{align*}
	Putting this together with \eqref{eq:proof-main-upper-1}, \eqref{eq:proof-main-upper-2} and \eqref{eq:proof-main-upper-3} we obtain the desired upper bound
	$$
	\limsup_{N\to \infty} N^{-1}\langle \Psi_N, H_N \Psi_N\rangle \le \cE^{\rm TF}(f),
	$$
\end{proof}

\begin{proof}[Proof of Corollary \ref{cor:GSE}] The upper bound in \eqref{eq:cv-GSE}, $N^{-1} E_N^{\rm QM} \le E^{\rm TF}+o(1)_{N\to \infty}$, follows immediately from Theorem \ref{thm:Gamma-CV2} (upper bound) and optimizing over $f$ in \eqref{eq:Gcv-upper}. 

To see the lower bound in \eqref{eq:cv-GSE}, we take arbitrarily a $N$-body wave function $\Psi_N$ such that 
\begin{align} \label{eq:approx-GS}
N^{-1}\langle \Psi_N, H_N \Psi_N\rangle = N^{-1} E_N^{\rm QM}+ o(1)_{N\to \infty} \le E^{\rm TF}+ o(1)_{N\to \infty}. 
\end{align}
Denote $\rho_{\Psi_N}=Nf_N$. Using $w\ge 0$, the Lieb-Thirring inequality for the kinetic energy \cite{LieThi-76}, H\"older's inequality and the assumption $V \in L^{p}(\R^d)+L^{q}(\R^d)$ with $p,q\in [1+d/2,\infty)$  we can estimate
\begin{align*}
N^{-1}\langle \Psi_N, H_N \Psi_N\rangle &\ge N^{-1}\langle \Psi_N, \sum_{i=1}^N (-\Delta_{x_i}+V(x_i)) \Psi_N\rangle \nn\\
& \ge   K \int_{\R^d} f_N^{1+2/d} - \int_{\R^d} Vf_N \ge (K/2) \int_{\R^d} f_N^{1+2/d} - C
\end{align*}
where $K,C>0$ are constants independent of $f_N$. Thus from \eqref{eq:approx-GS} deduce that $f_N$ is bounded in $L^{1+2/d}(\R^d)$. 

Up to a subsequence, $f_N \wto f$ in $L^{1+2/d}(\R^d)$, and hence Theorem \ref{thm:Gamma-CV2} (lower bound) implies that
\begin{align}\label{eq:lower-f-<=1}
N^{-1}\langle \Psi_N, H_N \Psi_N\rangle \ge \cE_N(f_N) \ge \cE^{\rm TF}(f) + o(1)_{N\to \infty}. 
\end{align}

Next, let us show that 
\begin{align}\label{eq:cE-f-cE-TF} 
E^{\rm TF}= \inf\left\{ \cE^{\rm TF}(g): 0\le g \in L^1(\R^d)\cap L^{1+2/d}(\R^d), \int_{\R^d} g \le 1\right\}.
\end{align}
This follows from a standard argument.  If $\int_{\R^d} g\le 1$, we can take a function 
$$0\le \varphi \in C_c^\infty(\R^d), \quad \int_{\R^d} \varphi+ \int_{\R^d}g =1.$$
Take a sequence $\{R_k\}\subset \R^d, |R_k|\to \infty$. By the variational principle
\begin{align*}
E^{\rm TF} &\le \lim_{k\to \infty} \cE^{\rm TF} (g+ \varphi(\cdot+R_k)) \nn\\
&= \cE^{\rm TF}(g) + K_{\rm cl} \int_{\R^d} |\varphi|^{1+2/d} + \frac{1}{2}\iint_{\R^d} \varphi(x)\varphi(y) w(x-y) dx dy \nn\\
&\le \cE^{\rm TF}(g) + K_{\rm cl} \int_{\R^d} |\varphi|^{1+2/d} + C( \|\varphi\|_{L^r}^2+ \|\varphi\|_{L^{s}}^2).
\end{align*}
In the last estimate we have used Young's inequality \cite[Theorem 4.2]{LieLos-01} and the assumption $w \in L^{p}(\R^d)+L^{q}(\R^d)$ with $p,q\in [1+d/2,\infty)$. Here the parameters $r,s>1$ are determined by
$$
\frac{1}{p}+\frac{2}{r}=2 = \frac{1}{q}+\frac{2}{s}
$$
and the constant $C>0$ depends only on $w$. By scaling $\varphi\mapsto \ell^{d}\varphi(\ell \cdot)$ with $\ell\to 0$, we conclude that $E^{\rm TF} \le \cE^{\rm TF}(g)$. Thus \eqref{eq:cE-f-cE-TF} holds. 

Note that the weak convergence $f_N \wto f$ implies that $\int_{\R^d} f \le 1$. Therefore, combining \eqref{eq:lower-f-<=1} and \eqref{eq:cE-f-cE-TF} we arrive at
$$
N^{-1}\langle \Psi_N, H_N \Psi_N\rangle \ge \cE_N(f_N) \ge \cE^{\rm TF}(f) + o(1)_{N\to \infty} \ge E^{\rm TF}+ o(1)_{N\to \infty}. 
$$
Thanks to \eqref{eq:approx-GS}, we obtain the convergence \eqref{eq:cv-GSE} and that $\cE^{\rm TF}(f)=E^{\rm TF}$. 

Finally, note that $\cE^{\rm TF}(g)$ is strictly convex in $g$. This can be seen from the strict convexity of the kinetic term $g\mapsto K_{\rm cl}\int_{\R^d} g^{1+2/d}$ and the convexity of the interaction term
$$
g\mapsto \frac{1}{2}\iint_{\R^d\times \R^d} g(x) g(y)w(x-y) d x dy= \frac{1}{2}\int_0^\infty \left[  \int_{\R^d} |(g*\chi_t)(z)|^2 dz \right] dt. 
$$
Here we have used again the Fefferman-de la Llave formula \eqref{eq:Fd-decomp}. Consequently, if $E^{\rm TF}$ has a minimizer $f^{\rm TF}$, then using $\cE^{\rm TF}(f)=E^{\rm TF}=\cE^{\rm TF}(f^{\rm TF})$, the strict convexity and \eqref{eq:cE-f-cE-TF}, we conclude that $f=f^{\rm TF}$. Thus $f_N = N^{-1}\rho_{\Psi_N} \wto f^{\rm TF}$ weakly in $L^{1+2/d}(\R^d)$, for every wave function $\Psi_N$ satisfying \eqref{eq:approx-GS} (not necessarily a ground state of $H_N$). 
\end{proof}

%%%%%%%%%%%%%%%%%%%%%%%%%%%%%%%%%%%%%%%%%%
%%%%%%%%%%%%%%%%%%%%%%%%%%%%%%%%%%%%%%%%%%

\end{document}